\newcommand{\CPLX}{\mathbb{C}}
\newtheorem{claim}{Claim}
\newcommand{\ket}[1]{|#1\rangle}  
\newcommand{\dla}{\ensuremath{\langle\hspace{-.25em}\langle}}
\newcommand{\dra}{\ensuremath{\rangle\hspace{-.25em}\rangle}}
\newcommand{\e}{\epsilon}
\newcommand{\codepar}[1]{\ensuremath{[\![#1]\!]}}
\begin{document}

\title{Error suppression via complementary gauge choices in Reed-Muller codes}

\author{Christopher Chamberland}
\email{c6chambe@uwaterloo.ca}
\affiliation{
    Institute for Quantum Computing and Department of Physics and Astronomy,
    University of Waterloo,
    Waterloo, Ontario, N2L 3G1, Canada
    }
\author{Tomas Jochym-O'Connor}
\email{tjoc@caltech.edu}
\affiliation{
   Walter Burke Institute for Theoretical Physics and Institute for Quantum Information and Matter,
   California Institute of Technology, Pasadena, California 91125, USA
    }

\begin{abstract}
Concatenation of two quantum error correcting codes with complementary sets of transversal gates can provide a means towards universal fault-tolerant computation. We first show that it is generally preferable to choose the inner code with the higher pseudo-threshold in order to achieve lower logical failure rates. We then explore the threshold properties of a wide range of concatenation schemes. Notably, we demonstrate that the concatenation of complementary sets of Reed-Muller codes can increase the code capacity threshold under depolarizing noise when compared to extensions of previously proposed concatenation models. We also analyze the properties of logical errors under circuit level noise, showing that smaller codes perform better for all sampled physical error rates. Our work provides new insights into the performance of universal concatenated quantum codes for both code capacity and circuit level noise. 
\end{abstract}

\pacs{03.67.Pp}

\maketitle

\section{Introduction}
\label{sec:Intro}

Quantum error correcting codes provide a means to suppress errors in physical quantum systems through the encoding of information into protected subspaces of a larger Hilbert space. Choosing how to encode such information however will greatly affect the rate and behaviour of such error suppressions. Moreover, codes must be carefully chosen in order to prevent the propagation of errors when trying to manipulate the stored information for the purposes of performing logical operations.

In this work, we extend the constructions first presented and analyzed in Refs.~\cite{JL14, CJL16} for concatenated codes that can implement a universal set of fault-tolerant operations. Namely, we expand on the possible combinations of different error correcting codes that can provide fault-tolerant universality, determining their threshold behaviour and logical failure rates given two different error models, code capacity and circuit level depolarizing noise. The original construction used complementary codes, the 7-qubit Steane and 15-qubit Reed-Muller, as they have different sets of transversal gates. These codes also correspond to the smallest codes of the 2D and 3D color code families~\cite{BM07, BM07c}. As such, the simplest generalization would be to consider larger codes from such families and determine their threshold behaviours. 

Alternatively, we show that by concatenating the 15-qubit Reed Muller code with a rotated version of the same code, we can obtain two single-qubit non-Clifford fault-tolerant gates which allow for the implementation of a universal set of gates. This rotated version of the Reed-Muller code can be thought of as fixing a new gauge for the code, where some of the $Z$~stabilizers of the original code are replaced by $X$~stabilizers. Moreover, such a concatenation configuration exhibits a higher threshold than combining the 15-qubit Reed-Muller code with different 2D~color codes in the case of code capacity noise.

The concatenation order for a universal scheme matters for the resulting value of the fault-tolerance threshold. In particular, given two codes with different pseudo-thresholds, it is generally preferable to take the inner code to be the one with the higher pseudo-threshold in order to get preferable logical noise suppression. We begin by arguing from an asymptotic coding perspective why such a behaviour should hold true in Section~\ref{sec:ConcatOrder}, and expand upon this argument with exact analytic and numeric simulations. In Section~\ref{sec:Circuit15Noise} we detail the underlying circuits for the universal model based on the concatenation of complementary Reed-Muller codes, and analyze the logical noise behaviour that results from such a concatenation. In order to optimize the performance under gate noise, we derive a new state preparation algorithm as well as logical gate constructions (see \cref{app:EncodedCircuits,app:MinCNOTforTandHTH,app:MinCNOTDepth}). Namely, the logical gate constructions for the non-Clifford gates are optimal with respect to their number of CNOT~gates as well as circuit depth.

\section{Optimizing the concatenation order for universal concatenated quantum codes}
\label{sec:ConcatOrder}

In \cref{subsec:OptimizeConcat} we begin by presenting an argument for determining the concatenation order of two different error correcting codes that yields the lowest logical failure rate. In \cref{subsec:PauliChanG}, we provide a brief review of the process matrix formalism and show how their calculation can be simplified for physical noise models described by Pauli channels. Using symmetries of error correcting codes and optimization methods developed in Ref.~\cite{CWBL16}, we compute the code capacity depolarizing noise\footnote{Code capacity noise corresponds to memory noise on each qubit with perfect encoding, error correction, and decoding.} thresholds for the codes listed in \cref{tab:StabilizerGeneratorsLists}. In \cref{subsec:UniversalQuantumCodes}, we provide new insights into constructing universal concatenated quantum codes. We use the optimization tools of this section to compute memory thresholds of several universal concatenated quantum codes for a depolarizing noise model. These results demonstrate that the concatenation order of two different error correcting codes can affect the threshold for a biased noise model.

\subsection{Optimizing the concatenation order of error correcting codes}
\label{subsec:OptimizeConcat}

Recall, the pseudo-threshold is the physical error rate at which a particular quantum error correction encoding exhibits a lower logical error rate than the physical error rate. However, an subtly important different quantity is the asymptotic threshold, which is often times referred to as just \textit{the threshold}, which is the physical error rate at which a family of codes begin to show exponential suppression of the logical error rate as a function of the distance of the code. In case of concatenated codes, we can consider multiple rounds of concatenation, and check when the error rate of the second concatenation crosses below that of the first to obtain a bound of asymptotic threshold, as in Ref.~\cite{CJL16}. In this work, we will be studying memory noise, that is when only the data qubits undergo independent noise, which is often referred to a code capacity noise. Additionally, we will study circuit level noise, where every circuit element undergoes independent noise in addition to memory noise, which provides a more realistic modeling of the noise that would be seen in experimental realizations. The calculated thresholds in both of these cases will be referred to appropriately as the code capacity or circuit level threshold, and whether they are pseudo or asymptotic thresholds will be specified.

Consider the codes $C_{1}$ and $C_2$ with code parameters $\codepar{n_{1},k_{1},d_{1}}$ and $\codepar{n_{2},k_{2},d_{2}}$, respectively. Suppose we concatenate the codes $C_{1}$ and $C_{2}$, see Refs.~\cite{JL14, CJL16}. In this work, we explore the freedom in which we can choose the concatenation order of the two codes, that is the specific choice of which code should be the inner and outer code. Here, we present an argument as to why it is generally favourable to choose a code with a higher pseudo-threshold as the inner code.

Consider the case where the code $C_1$ forms the outer code, while $C_2$ forms the inner code, that is the qubits forming the outer code~$C_1$ are themselves encoded in the code~$C_2$. Suppose that $C_2$ has a higher pseudo-threshold than $C_1$, where the pseudo-threshold is the threshold rate at which the logical error rate of the code becomes smaller than the physical error rate. Denote these threshold rates as $\e_1$, and~$\e_2$. We argue that it is favourable that $\e_1 < \e_2$ since there will be a regime of physical error rates $\e_1 < p < \e_2$ that may be otherwise uncorrectable. In this regime, the logical error rate for inner code~$C_2$ will satisfy $p_2(p) < p$. The rate at which $p_2(p)$ is suppressed will be a function of the code used and other important fault-tolerance factors, however if the suppression is high enough it could be that $p_2 < \e_1$. In such a case, since the logical error rate is below the pseudo-threshold of the outer code~$C_1$, noise will be further suppressed and the error rate~$p$ is below the pseudo-threshold of the concatenated code. Of course, this argument relies on the logical error rate of the inner code~$C_2$ being suppressed rapidly below threshold, which is always true as the code size grows for families of topological codes, however we found it also to be true via simulation for small topological codes.

Consider now if we chose the alternative encoding order. Then since $\e_1<p$, the logical error rate would actually increase for the inner code, $p_1 (p) > p$. Moreover, if the logical error rate were to increase sufficiently rapidly, it would be in a regime where $\e_2 < p_1$, thereby no longer being below threshold for the new outer code. As such, there would be no hope for further suppressing errors, resulting in the logical error rate being worse than the physical error rate.

\subsection{Effective noise for Pauli channels}
\label{subsec:PauliChanG}

For the remainder of this section, we will focus on Pauli channels and perform an exact error analysis using the process matrix formalism developed in Refs.~\cite{RDM02,CWBL16}. The process matrix formalism allows one to compute the effective noise at the logical level after performing error correction given a noise channel acting on all of the physical qubits. To be concrete, suppose the physical noise channel is given by $\mathcal{N}$. For a stabilizer code $C$ encoding a single logical qubit into $n$ physical qubits, we can write the encoding map $\mathcal{E}: \mathcal{H}_{2} \to \mathcal{H}_{C}\subset \mathcal{H}_{2^{n}}$ as

\begin{align}
\mathcal{E}(\rho_{in})=B\rho_{in}B^{\dagger},
\label{eq:EncodingMap}
\end{align}
where $B=\ket{\overline{0}} \langle 0| + \ket{\overline{1}} \langle 1|$. A pure input state $\ket{\psi} = \alpha \ket{0} + \beta \ket{1}$ is encoded to $\ket{\overline{\psi}} = \alpha \ket{\overline{0}} + \beta \ket{\overline{1}}$ under the map $\mathcal{E} \in \mathcal{H}_{C}$.

The decoding step can be separated into two parts. Given a measured syndrome $l$ with associated recovery map $R_{l}$, we define $\mathcal{R}_{l}$ as the map that includes the measurement update and recovery $R_{l}$. The map $\mathcal{E}^{\dagger}$ simply decodes the encoded state back to $\mathcal{H}_{2}$. The effective noise channel can then be written as 

\begin{align}
\mathcal{G}(\mathcal{N},R_{l}) = \mathcal{E}^{\dagger} \circ \mathcal{R}_{l} \circ \mathcal{N} \circ \mathcal{E}.
\label{eq:EffectiveChannel}
\end{align}

Using the set of normalized Pauli matrices $\boldsymbol{\sigma} = (I,X,Y,Z)/\sqrt{2}$, it was shown in Ref.~\cite{CWBL16} that the matrix representation of $\mathcal{G}(\mathcal{N},R_{l})$ is given by

\begin{align}
\boldsymbol{\mathcal{G}}_{\sigma,\tau}(\mathcal{N},R_{l}) = \sum_{\sigma \in \boldsymbol{\sigma}} |\mathcal{G}(\mathcal{N},R_{l})(\sigma) \dra \dla \sigma |,
\label{eq:MatrixRepresentationG}
\end{align}
where the map $|.\dra:\CPLX^{d\times d}\to\CPLX^{d^2}$ is defined by setting $|B_{j} \dra = \boldsymbol{e}_{j}$ and the set $\{ \boldsymbol{e}_{j} \}$ corresponds to the canonical unit basis of $\CPLX^{d^{2}}$.

The effective process matrix presented in \cref{eq:EffectiveChannel} is dependent on the recovery map $R_{l}$ that was chosen for the measured syndrome $l$. The full process matrix can be obtained by averaging over all syndromes. 

\begin{table*}
\begin{tabular}{ c|c|c|c}
 5-qubit code & Steane code & 15-qubit Reed-Muller & 17-qubit color code \\ \hline
     XZZXI      &   IIIZZZZ     & ZIZIZIZIZIZIZIZ & ZZZZIIIIIIIIIIIII \\
     IXZZX      &   IZZIIZZ     & IZZIIZZIIZZIIZZ & ZIZIZZIIIIIIIIIII \\
     XIXZZ      &   ZIZIZIZ     & IIIZZZZIIIIZZZZ & IIIIZZIIZZIIIIIII \\
     ZXIXZ      &   IIIXXXX    & IIIIIIIZZZZZZZZ & IIIIIIZZIIZZIIIII \\
	             &   IXXIIXX    & IIZIIIZIIIZIIIZ     & IIIIIIIIZZIIZZIII \\
                     &   XIXIXIX    & IIIIZIZIIIIIZIZ     & IIIIIIIIIIZZIIZZI \\
                     &                   & IIIIIZZIIIIIIZZ     &  IIIIIIIZIIIZIIIZZ \\
                     &                   & IIIIIIIIIZZIIZZ     &  IIZZIZZIIZZIIZZII \\
                     &                   & IIIIIIIIIIIZZZZ     &  XXXXIIIIIIIIIIIII \\
                     &                   & IIIIIIIIZIZIZIZ     &  XIXIXXIIIIIIIIIII \\
                     &                   & XIXIXIXIXIXIXIX  &  IIIIXXIIXXIIIIIII \\
                     &                   & IXXIIXXIIXXIIXX  &  IIIIIIXXIIXXIIIII \\
                     &                   & IIIXXXXIIIIXXXX  &  IIIIIIIIXXIIXXIII \\
                     &                   & IIIIIIIXXXXXXXX  &  IIIIIIIIIIXXIIXXI \\
                    &                    &                             &  IIIIIIIXIIIXIIIXX  \\
                    &                    &                             &  IIXXIXXIIXXIIXXII  \\\hline
$\langle HS,X,Z\rangle$ & $\langle H,S,CNOT\rangle$ & $\langle 
T,X,Z,CNOT\rangle$ & $\langle H,S,CNOT\rangle$
\end{tabular}
\caption{Stabilizer generators (top) and the generating set of transversal logical operations (bottom) for the 5-qubit 
code~\cite{LMPZ96}, Steane's 7-qubit code~\cite{Steane96b}, the 15-qubit Reed-Muller code~\cite{ADP14}, and the 17-qubit color code~\cite{Bombin15} (see Fig.~\ref{fig:2DCC17}), where $H$ and $S$ are the Hadamard and phase gates respectively, $T = \mathrm{diag}(1,e^{i\pi/4})$, and $\langle.\rangle$ denotes the group generated by the argument.}
\label{tab:StabilizerGeneratorsLists}
\end{table*}

Pauli channels have the form:
\begin{align}
\mathcal{N}(\rho) = (1-p_{x}-p_{y}-p_{z})\rho + p_{x}X\rho X + p_{y}Y\rho Y + p_{z}Z \rho Z.
\label{eq:PauliChan}
\end{align}
By choosing the Pauli basis, the matrix representation of the channel in \cref{eq:PauliChan} is diagonal and can be expressed as 
\begin{align}
\boldsymbol{\mathcal{N}} &= \left( \begin{array}{cccc}
                                          1 & 0 & 0 & 0  \\
                                          0 & 1-2p_{y}-2p_{z} & 0 & 0\\  
                                          0 & 0 & 1-2p_{x}-2p_{z} & 0\\
                                          0 & 0 & 0 & 1-2p_{x}-2p_{y}\\                                                                   
                                          \end{array} \right) \nonumber \\
&= \mathrm{diag}(1,x,y,z).
\end{align}
The derivation of the matrix representation of \cref{eq:MatrixRepresentationG} is significantly simplified for Pauli channels. We first define the function
\begin{align}
f_{j,\sigma}(R_{l}) = \eta(S_{j},R_{l})\eta(R_{l},\overline{\sigma}),
\label{eq:FunctionCommute}
\end{align}
where given the Pauli operators $P_{1}$ and $P_{2}$, $\eta(P_{1},P_{2}) = \pm1$ for $P_{1}P_{2}=\pm P_{2}P_{1}$. The operators $S_{j}$ belong to the stabilizer group and $\overline{\sigma}$ is an element in the set of the codes logical operators. Lastly, we define $w_{\sigma}(P)$ to be the $\sigma$ weight of the Pauli operators $P$. For instance, $w_{Z}(Z_1X_2Y_3Z_4Z_5) = 3$.

With the above definitions, and for an $n$-qubit channel of the form $\mathcal{N} = (\mathcal{N}^{(1)})^{\otimes n}$ where $\mathcal{N}^{(1)}$ is given by \cref{eq:PauliChan}, it was shown in \cite{RDM02} that the components of the process matrix are given by

\begin{align}
\boldsymbol{\mathcal{G}}_{\sigma,\tau}(R_{l}) = \delta_{\sigma,\tau} \frac{1}{|S|}\sum_{j} f_{j,\sigma}(R_{l})x^{w_{X}(S_{j}\overline{\sigma})}y^{w_{Y}(S_{j}\overline{\sigma})}z^{w_{Z}(S_{j}\overline{\sigma})}.
\label{eq:ProcessMatComponentPauli}
\end{align}

It is also straightforward to compute the process matrix for concatenated codes. In Ref.~\cite{RDM02}, it was shown that by computing the effective noise channel at the first concatenation level $\mathcal{G}^{(1)}(\mathcal{N})$ , the channel at the second level could be computed by replacing the physical noise $\mathcal{N}$ by $\mathcal{G}^{(1)}$ in \cref{eq:EffectiveChannel}. This process can be repeated recursively to obtain the effective noise channel at any desired concatenation level. 

For an $\codepar{n,k,d}$ stabilizer code, the number of distinct syndromes is given by $2^{n-k}$. If a code encodes a single logical qubit, one would be required to compute $2^{n-1}$ process matrices in \cref{eq:ProcessMatComponentPauli} for each recovery map $R_{l}$. However, as was shown in Ref.~\cite{CWBL16}, not all process matrices are distinct. Given a generic noise model which is independently and identically distributed (i.i.d.) on all the physical qubits, there can be many process matrices for distinct syndromes which are identical. For a generic Pauli channel, the number of distinct process matrices is even smaller. It is also important to point out that the number of distinct process matrices depends on the particular decoder that is being used, that is, the choice of recovery maps $R_{l}$ for a given syndrome measurement $l$. One particular type of decoder which will be used below is the symmetric decoder. By symmetric decoder, we are considering a decoder that associates the measured syndrome with the error that acts on the fewest number of qubits and is consistent with the syndrome. As an example, the standard decoder for CSS codes consists of correcting $X$ and $Z$ errors independently. Decoding in this way would yield lower logical $X$ and $Z$ failure rates compared to logical $Y$ failure rates. In contrast, applying the symmetric decoder to CSS codes would correct symmetrically between $X$, $Y$ and $Z$ errors so that the logical failure rates of all Pauli's would be symmetric. This is achieved by associating syndromes to some (but not all) errors of the form $X_{i}Z_{j}$ to errors of the form $X_{s}Y_{t}$ or $Z_{u}Y_{v}$.  Note that for the concatenated codes studied in this section, the symmetric decoder is applied independently at each concatenation level and is thus is suboptimal when compared with the min-weight decoder that accounts for syndrome information from both concatenation layers, see Sec.~\ref{subsec:NoiseAndDecoding}.

\begin{figure}
\centering
\includegraphics[width=0.30\textwidth]{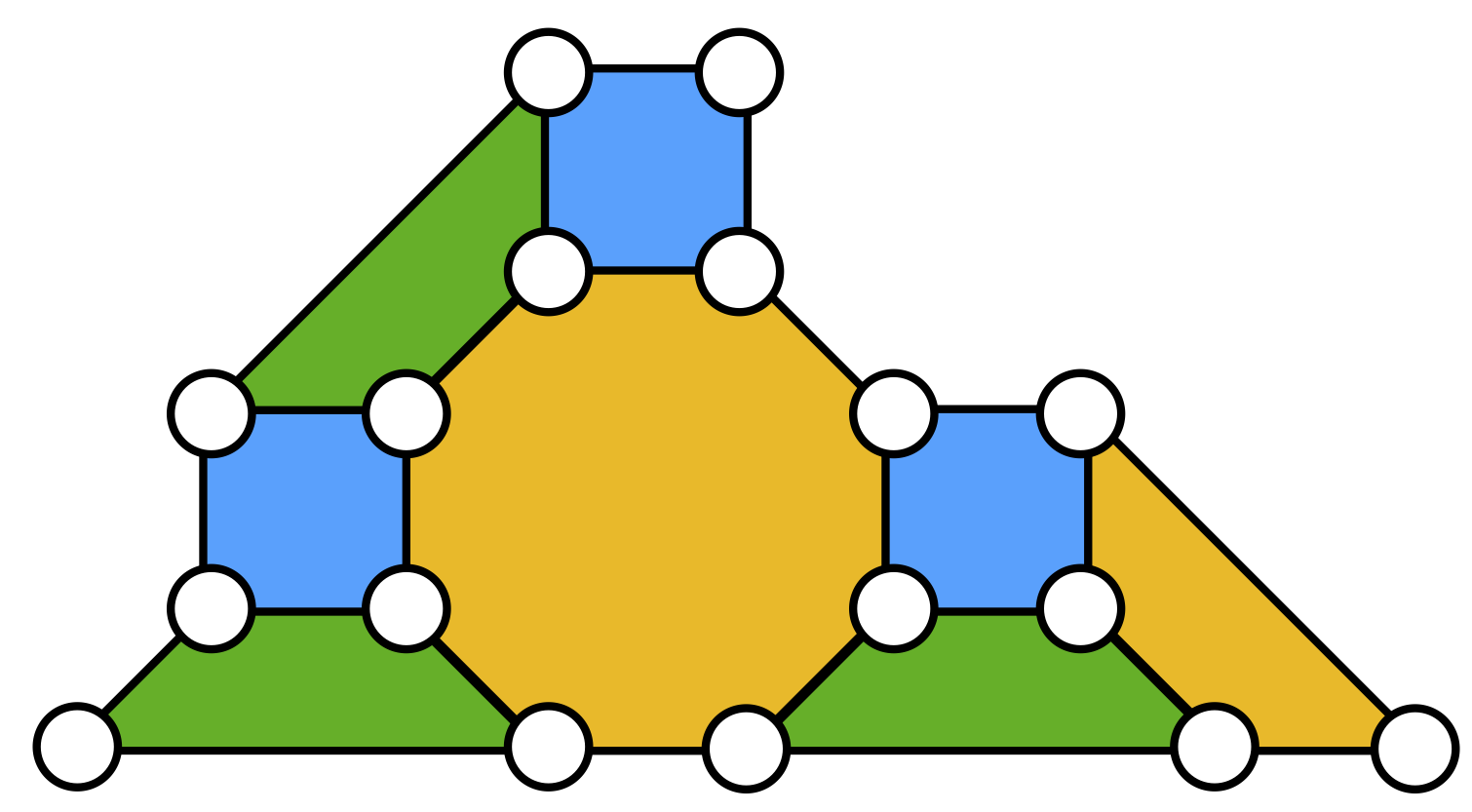}
\caption{Graphical representation of the 17-qubit 2D color code. The $X$~and $Z$~stabilizers of the code are symmetric, given by many weight-4 and a single weight-8 plaquette. The distance of the code is 5 as it scales with the length of each of the sides of the triangular lattice.}
\label{fig:2DCC17}
\end{figure}

In \cref{tab:DistinctProcessMatTab}, we computed the number of distinct process matrices for the codes of \cref{tab:StabilizerGeneratorsLists}. To find the total number of distinct process matrices, we computed $\boldsymbol{\mathcal{G}}_{\sigma,\tau}(R_{l})$ for each syndrome $R_{l}$ using a generic i.i.d. Pauli channel and a symmetric decoder. Comparing the full list of the obtained process matrices allowed us to determine which ones were distinct. Comparing the total number of possible syndrome measurements with the number of distinct process matrices, it can be seen that substantial savings can occur.

\begin{table}
\begin{tabular}{|c|c|c|c|c|}
 \hline
 &\codepar{5,1,3} & \codepar{7,1,3} & \codepar{15,1,3} & \codepar{17,1,5} \\ \hline
Number of syndromes & 16 & 64 & 16384 & 65536  \\ \hline
Distinct process matrices & 4 & 7 & 23 & 106 \\  \hline
\end{tabular}
\caption{Table for the number of distinct process matrices for the codes in \cref{tab:StabilizerGeneratorsLists}. We used a symmetric decoder and a generic Pauli channel. Comparing with the total number of distinct syndromes, it can be seen that substantial savings can occur.}
\label{tab:DistinctProcessMatTab}
\end{table}

\begin{table}
\begin{tabular}{|c|c|c|c|c|}
 \hline
 &\codepar{5,1,3} & \codepar{7,1,3} & \codepar{15,1,3} & \codepar{17,1,5} \\ \hline
Threshold & $p=0.1835$ & $p=0.1291$ & $p=0.0254$ & $p=0.1608$  \\ \hline
\end{tabular}
\caption{Exact depolarizing noise thresholds for the codes given in table \cref{tab:StabilizerGeneratorsLists}. We note that codes with higher thresholds also have higher pseudo-thresholds. The depolarizing noise channel is given by \cref{eq:DepolarisingNoiseEQ}. The 15-qubit code has the lowest depolarizing noise threshold value, due to its high number of qubits and low distance.}
\label{tab:DepThresholdTable}
\end{table}

Using the above tools, we now compute code capacity thresholds of the codes in \cref{tab:StabilizerGeneratorsLists} for a depolarizing noise model given by
\begin{align}
\mathcal{N}(\rho) = (1-\frac{3}{4}p)\rho +\frac{p}{4}(X \rho X + Y \rho Y + Z \rho Z).
\label{eq:DepolarisingNoiseEQ}
\end{align}
In the infinite concatenation limit, we demand that the effective noise channel converges to the identity channel. This ensures that the initial state can be perfectly recovered in the presence of noise. Therefore, a threshold will correspond to the value $p_{th}$ such that for $p<p_{th}$, the process matrix converges to the identity matrix in the infinite concatenation limit.

In computing the effective noise channel at each concatenation level, we used an optimized hard\footnote{In concatenated coding, hard decoding refers to decoding errors layer by layer, where the correction at one concatenation layer is independent of the corrections made at another layer.} decoding algorithm developed in Ref.~\cite{CWBL16} to achieve lower logical failure rates for the studied error correcting codes. In \cref{tab:DepThresholdTable}, we computed exact threshold values for the codes listed in \cref{tab:StabilizerGeneratorsLists}. It can be seen that the 15-qubit code has a lower threshold value by roughly an order of magnitude than the other codes that were considered. The reason is that the 15-qubit code encodes one logical qubit using 15 physical qubits, and the effective distance is only 3 (compared to the 17-qubit color code which is 5). The large number of physical qubits means there are more opportunities for errors to be introduced which cannot be corrected due to the code's low distance.

In \cref{subsec:UniversalQuantumCodes}, we will use the tools developed in this section and in \cref{subsec:OptimizeConcat} in order to find an optimized concatenation scheme for the codes in \cref{tab:StabilizerGeneratorsLists} that lead to universal concatenated quantum codes. Despite the low threshold of the 15-qubit code, we will show that we can use its asymmetric distance in $X$ and $Z$ to our advantage in order to obtain a code with a relatively high memory threshold compared to other concatenation schemes.

\subsection{Concatenation order analysis of universal concatenated quantum codes}
\label{subsec:UniversalQuantumCodes}

\begin{table*}
\begin{tabular}{ |c|c|c|c|c|c|c|c|c|c|}
\hline
& [5,15] & [15,5] & [7,15] & [15,7] & [15,15H] & [17,15] & [15,17] & [7,17] & [17,7] \\ \hline
Depolarizing threshold & $p=0.1146$ & $p=0.1393$ & $p=0.04768$ & $p=0.06886$ & $\boldsymbol{p=0.1065}$ & $p=0.05993$ & $p=0.0997$ & $p=0.1523$ & $p=0.1425$ \\ \hline
\end{tabular}
\caption{Depolarizing threshold results for a set of universal concatenated quantum codes (apart from the last two columns). The depolarizing noise channel is given by \cref{eq:DepolarisingNoiseEQ}. We defined the notation $[n_{1},n_{2}]$ to indicate that the code $C_{1} = \codepar{n_{1},1,d_{1}}$ forms the outer code and  $C_{2} = \codepar{n_{2},1,d_{2}}$ forms the inner code. Note that 15H indicates the Hadamard transform of the 15-qubit Reed-Muller code. All other codes are chosen from \cref{tab:StabilizerGeneratorsLists}. For universal concatenated codes where the CNOT gate is transversal in both codes, it can be observed that [15,15H] has the highest memory threshold.}
\label{tab:DepolarizingThresholdResults}
\end{table*}

The idea behind universal concatenated quantum codes is to use a modified notion of transversality which demands only that single-qubit errors remain correctable \cite{JL14}. By concatenating two different error correcting codes, a gate that cannot be implemented transversally in one code can be implemented transversally in the other and vice-versa. Hence all gates in the universal gate set can be implemented fault-tolerantly.

In the remainder of this section, we concatenate pairs of stabilizer codes chosen from \cref{tab:StabilizerGeneratorsLists} that form universal concatenated quantum codes. The 5-qubit code is the cyclic perfect code, the smallest code that corrects for any single qubit error. The 7-qubit Steane code is the smallest 2D color code, while the 17-qubit code is the smallest 2D color code with distance~$5$, see Fig.~\ref{fig:2DCC17} for a graphical representation of the code where all $X$ and $Z$~stabilizers are constructed by considering plaquettes of the code. Note that all of the plaquettes intersect at two qubits, therefore the stabilizers clearly commute. The 2D color code family has the property that all of the Clifford gates can be implemented transversally. The 15-qubit Reed-Muller code is the smallest code of the 3D color code family. In particular, by fixing a particular gauge in the 3D~color code, that is fixing all of the weight-4 $Z$~stabilizers in Table~\ref{tab:StabilizerGeneratorsLists}, the code contains a transversal $T$~gate. Fixing the gauge can be achieved for higher distance 3D~color codes as well by fixing the basis of the set of faces at the intersections of different tetrahedra in the 3D~gauge color code.

We note an interesting feature about the 15-qubit Reed-Muller code. By taking its Hadamard transform, we obtain a new code (replace all $Z$ stabilizers with $X$ stabilizers and vice-versa) with a generating set of transversal logical operations given by $\langle HTH, X,Z, \text{CNOT} \rangle$, see Appendix~\ref{app:TransversalHTH} for details\footnote{Note, that by exchanging the $X$ and $Z$~stabilizers, this is equivalent to changing the gauge stabilizers from $Z$~to~$X$, recall they were given by the weight-4 stabilizers. A similar result could be obtained by changing the gauge stabilizers to $Y$~stabilizers.}. By concatenating the 15-qubit Reed-Muller code with its Hadamard transform, we also obtain a universal concatenated quantum code since the generating set $\langle T, HTH, X, Z, \text{CNOT} \rangle$ is universal (see \cref{app:CliffordGenerating}). For the remainder of this work, we shall refer to the gate~$T_x = HTH$, as it corresponds to a $T$~gate rotation in the $x$-axis.

In \cref{tab:DepolarizingThresholdResults}, we computed exact memory thresholds using the depolarizing noise model of \cref{eq:DepolarisingNoiseEQ} for a set of universal concatenated quantum codes (apart from the last two columns where thresholds were computed for the [7,17] and [17,7] code). Comparing the [7,17] code with the [17,7] code, taking the 17-qubit color code as the inner code yields a higher threshold value. This is in agreement with \cref{subsec:OptimizeConcat} since as can be seen from \cref{tab:DepThresholdTable}, the 17-qubit code has a higher threshold (and pseudo-threshold) compared to the Steane code. On the other hand, recall that the 15-qubit Reed-Muller code has the lowest threshold value compared to all codes considered in \cref{tab:StabilizerGeneratorsLists}. Therefore, from \cref{subsec:OptimizeConcat}, we expect that setting the 15-qubit code as the outer code when concatenated with other codes of \cref{tab:StabilizerGeneratorsLists} would yield higher threshold values. The threshold values computed for the codes in \cref{tab:DepolarizingThresholdResults} are in agreement with our expectations.


\begin{table}
\begin{tabular}{ |c|c|c|}
\hline
& $p_{x} = 0.001$, $p_{y} = 0.001$ & $p_{z} = 0.001$, $p_{y} = 0.001$\\ \hline
Threshold & $p_{z} = 0.1199$ & $p_{x} = 0.0437$ \\ \hline
\end{tabular}
\caption{Biased noise thresholds for the Pauli channel given in \cref{eq:PauliChan} using the [15,15H] code. The inner code is the Hadamard transform of the 15-qubit Reed-Muller code which has a $d_{Z}=7$ distance. Therefore, the threshold improves significantly when the noise is biased towards $Z$ errors. If the [15H,15] code was used, the threshold values would be reversed.}
\label{tab:BiasedNoiseTab}
\end{table}

As was first observed in Ref.~\cite{NSZ16}, the 15-qubit Reed-Muller code can be concatenated with the 5-qubit code to form a universal concatenated quantum code (since the transversal $HS$ gate of the 5-qubit code can be combined with the transversal  $T$ and CNOT gates of the 15-qubit code to form a universal gate set). These codes have the highest memory thresholds for the universal concatenated codes studied in \cref{tab:DepolarizingThresholdResults}. However, since the 5-qubit code cannot implement the CNOT gate transversally, in a full circuit level noise analysis, the CNOT gate would not benefit from the dual protection of both codes. The dual protection of the CNOT gate was in large part responsible for the high thresholds observed in Ref.~\cite{CJL16}. Hence, even though the 15-qubit Reed-Muller code concatenated with the 5-qubit code has the highest code capacity threshold, we would not expect these codes to be competitive in a fault-tolerant simulation compared to universal concatenated quantum codes which can implement the CNOT gate transversally in both codes $C_{1}$ and $C_{2}$.

The next universal concatenated quantum code with the highest memory threshold in \cref{tab:DepolarizingThresholdResults} is the code obtained by concatenating the 15-qubit Reed-Muller code (chosen as the outer code) with its Hadamard transform (which we denote [15,15H]). Note that neither of the two codes are 2D~color codes and the universal gate set is not composed of the standard Clifford + $T$ gate set. We also note that the code's threshold is more than twice the threshold for [7,15] which is the code that was studied in Ref.~\cite{CJL16}. In the next section we will perform a full circuit level noise analysis of the [15,15H] code and compare its logical failure rates with the [7,15] code. 

To understand the high threshold value of the [15,15H] code, we first note that the [15,15H] code will perform very well against $Z$ errors at the inner level due to the high $d_{Z}=7$ distance. At the outer level, there will be many "left-over" $X$ errors due to the inner codes low $d_{X}=3$ distance. However, this effect will be compensated by the outer codes large $d_{X}=7$ distance which will be able to correct at least three blocks containing logical $X$ errors. For the [15H,15] code (where the Hadamard transform of the 15-qubit Reed-Muller code is used as the outer code), the analysis for $X$ and $Z$ errors is reversed. Since for depolarizing noise models both $X$ and $Z$ errors occur with equal probability, it is not surprising that we find that both the [15,15H] and [15H,15] codes have the same depolarizing noise threshold.

Since 15H is a $\codepar{15,1,(d_{X}=3,d_{Z}=7)}$ code, the order in which it is concatenated with the 15-qubit code matters if the noise is biased. If the noise is biased in such a way that $Z$ errors occur with higher probability than $X$ and $Y$ errors, choosing 15H as the inner code would yield a higher threshold given that 15H has higher $d_{Z}$ distance (see \cref{tab:BiasedNoiseTab}), which is in agreement with the analysis of \cref{subsec:OptimizeConcat}.

\section{Circuit level noise analysis of the [15,15H] code}
\label{sec:Circuit15Noise}

In \cref{sec:ConcatOrder} we saw that the [15,15H] code had the highest memory threshold value of all other universal concatenated quantum codes where the CNOT gate was transversal in both codes. We used the notation [15,15H] to indicate that the Hadamard transform of the 15-qubit Reed-Muller code was taken as in the inner code. 

In this section we provide a full circuit level noise analysis of the [15,15H] code using the methods of Refs.~\cite{PR12,CJL16,CJL16b}. We calculate the logical failure rates of the $T$, $T_x$ and CNOT gates at the first and second concatenation level. We provide a comparison with the [7,15] code where threshold results were calculated in Ref.~\cite{CJL16}. 

\subsection{Fault-tolerant error correction}
\label{subsec:StatePrep}

\begin{figure}
\centering
\begin{subfigure}{0.35\textwidth}
\includegraphics[width=\textwidth]{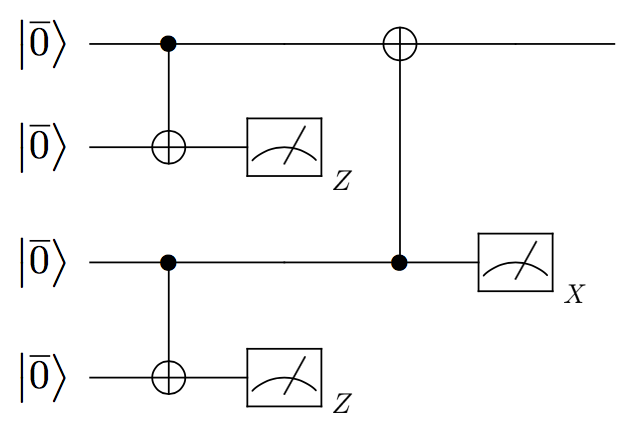}
\caption{}
\label{fig:nOprep}
\end{subfigure}
\begin{subfigure}{0.35\textwidth}
\includegraphics[width=\textwidth]{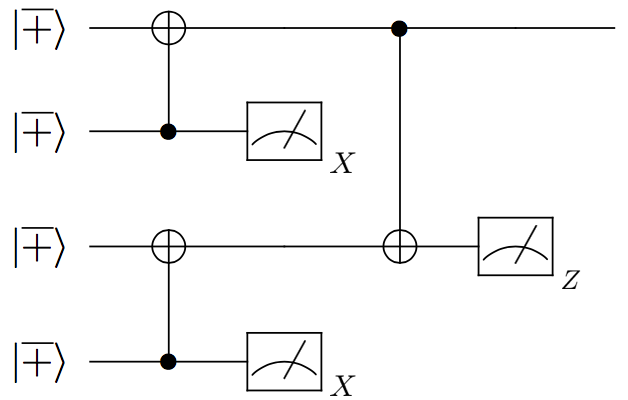}
\caption{}
\label{fig:nPlusPrep}
\end{subfigure}
\caption{Circuits for verifying the encoded $\ket{\overline{0}}$ and $\ket{\overline{+}}$ states as part of Steane error correction. All gates are applied transversally. The extra encoded $\ket{\overline{0}}$ and $\ket{\overline{+}}$ states are verifier states and are used for detecting errors that can occur in the encoding circuits.  If a non-trivial syndrome or the $-1$ eigenvalue of a logical $Z$ or $X$ operator is measured, the states are rejected and the process starts anew with fresh ancilla states.}
\label{fig:nPrep}
\end{figure}

We use Steane's method as part of our fault-tolerant error correction protocol, as the 15-qubit Reed-Muller code is a CSS code. In Steane error correction, $X$ and $Z$ errors are measured separately using encoded $\ket{\overline{0}}$ and $\ket{\overline{+}}$ ancilla states (see \cref{fig:nPrep}). We used Steane's Latin rectangle method \cite{Steane02} in order to compute the encoded $\ket{\overline{0}}$ circuit and applied the optimization methods of \cite{PR12,CJL16b} allowing one to remove some of the CNOT gates. In \cref{app:EncodedCircuits}, we provide a general method for obtaining encoded $\ket{\overline{+}}$ circuits for CSS codes that are not self-dual. The encoding circuits for the $\ket{\overline{0}}$ and $\ket{\overline{+}}$ ancilla states of the 15-qubit Reed-Muller code were obtained in Ref.~\cite{CJL16b} while the circuits for the Hadamard transform of the 15-qubit Reed-Muller code were obtained using the methods of \cref{app:EncodedCircuits}.

Since the Hadamard transform of the 15-qubit Reed-Muller code forms the inner code of [15,15H], the encoding circuits for the full [15,15H] code are obtained by replacing the physical $\ket{0}$ and $\ket{+}$ states in the encoding circuit of the 15-qubit code with the encoded $\ket{\overline{0}}$ and $\ket{\overline{+}}$ states of the Hadamard transform of the 15-qubit code. All of the remaining CNOT gate locations in the encoding circuits of the 15-qubit code are performed transversally.

\subsection{Circuits implementing $T$ and $T_{x}$ gates}
\label{subsec:TandHTHcircuits}

The 15-qubit code can implement all gates in the set generated by $\langle T, CNOT \rangle$ transversally.  However, the $T_x$ gate cannot be implemented transversally since, as was shown in \cref{subsec:UniversalQuantumCodes}, $\langle T, T_x, \mathrm{CNOT} \rangle$ forms a universal gate set. In \cref{app:MinCNOTforTandHTH} we provide a circuit which implements the logical $T_x$ gate (illustrated in \cref{fig:HTHcircuit225}) on the 15-qubit code. Similarly, we provide a circuit which implements the logical $T$ gate on the 15-qubit Hadamard transformed code (illustrated in \cref{fig:Tcircuit225}). Note that in both cases, the circuits have the form $U_{CNOT}T_{x,j}U^{\dagger}_{CNOT}$ for the logical $T_x$ gate and $U_{CNOT}T_{k}U^{\dagger}_{CNOT}$ for the logical $T$ gate where $j,k \in \{1,2, \cdots , 15 \}$. The operator $U_{CNOT}$ is given by a product of CNOT gates acting on the 15-qubit code (or the 15-qubit Hadamard transform code depending on the particular logical gate we are performing). The role of the CNOT gates is to transform part of the stabilizer generators to operators that have no support on the qubit acted on by $T_x$ or $T$. Applying $U^{\dagger}_{CNOT}$ will then guarantee that the resulting operators will be mapped back to operators in the stabilizer group. 

In \cref{app:MinCNOTforTandHTH,app:MinCNOTDepth} we also provide a proof that the operators $U_{CNOT}$ in \cref{fig:TandHTHcircuits} are implemented using the minimum number of gates as well as circuit depth, when using only CNOT gates. This is especially relevant when performing a circuit level noise analysis since the CNOT gates can propagate errors badly within a code block. Therefore, having fewer CNOT gates in the circuit of $U_{CNOT}$ reduces the probability for logical faults to occur. Additionally, minimizing the depth will reduce the total number of memory noise locations, reducing overall error rates.

\subsection{Noise model and decoding algorithm}
\label{subsec:NoiseAndDecoding}

In our full circuit level noise analysis, we considered the depolarizing noise model of \cref{eq:DepolarisingNoiseEQ}. Applying the depolarizing noise model to each location in a quantum circuit, gates, state-preparation, measurements and qubits waiting in memory can fail according to the following noise model:

\begin{enumerate}
   \item A noisy CNOT gate is modelled as applying a CNOT gate followed by, with probability $\frac{15p}{16}$, a two-qubit Pauli error drawn uniformly and independently from $\{I,X,Y,Z\}^{\otimes 2}\setminus \{I\otimes I\}$. 
   \item A noisy preparation of the $\ket{0}$ state is modelled as the ideal preparation of the $\ket{0}$ state with probability $1-\frac{p}{2}$ and $\ket{1}=X\ket{0}$ with probability $\frac{p}{2}$. Similarly, the noisy preparation of the $\ket{+}$ state is modelled as the ideal preparation of the $\ket{+}$ state with probability $1-\frac{p}{2}$ and $\ket{-}=Z\ket{+}$ with probability $\frac{p}{2}$.
   \item A noisy measurement in the $Z$-basis is modelled by applying a Pauli $X$ error with probability $\frac{p}{2}$ followed by an ideal measurement in the $Z$-basis. Similarly, a noisy measurement in the $X$-basis is modelled by applying a Pauli $Z$ error with probability $\frac{p}{2}$ followed by an ideal measurement in the $X$-basis.
   \item A single-qubit gate error or storage error is modelled by applying the ideal gate with probability $1-\frac{3p}{4}$. With probability $\frac{3p}{4}$, the ideal gate is implemented followed by a Pauli error chosen uniformly from the set  $\{ X,Y,Z \}$. 
\end{enumerate}

After measuring the error syndrome, a decoding procedure is implemented in order to correct the errors that could occur. We applied a minimum weight decoder to decode the [15,15H] code. Before proceeding with the description of our decoding scheme, a few definitions are required. 

We define $Q_{l}$ to be the set of all errors corresponding to the measured syndrome value $l$. As an example, for Steane's 7-qubit code and the syndrome $l = 001000$, using the stabilizers of \cref{tab:StabilizerGeneratorsLists} we find that $Q_{l} = \{ X_{1}, X_{2}X_{3}, X_{6}X_{7} ,\dots \}$.

Next, given the stabilizer group $\mathcal{S}$, we define $N(\mathcal{S}) = {\{ P| PQP^{\dagger}}\in \mathcal{S} \> \forall  \> Q \in \mathcal{P}_{n} \}$ to be the normalizer of the stabilizer group ($\mathcal{P}_{n}$ is the $n$-qubit Pauli group). Equivalently, $N(\mathcal{S})$ is the set of all elements that commute with elements of $S$. Consequently, the set $N(\mathcal{S}) \setminus \mathcal{S}$ is the set of all non-trivial logical operators for a code described by the stabilizer group $\mathcal{S}$. 

Given the above definitions we now describe our decoding algorithm:

\begin{enumerate}
   \item Each qubit in the 15-qubit Reed-Muller code is encoded in the 15H code. Correct each of the 15-qubit blocks using the Hadamard transform of the 15-qubit stabilizer generators shown in \cref{tab:StabilizerGeneratorsLists}. Store the total weight of the correction for each block.
   \item  Measure the new syndrome $l$ at the outer level using the stabilizer generators of the 15-qubit Reed-Muller code in \cref{tab:StabilizerGeneratorsLists}. 
   \item For each element in the set $Q_{l}$, correct the corresponding blocks with the appropriate logical operators. Repeat for all logical operators in the set $N(\mathcal{S}) \setminus \mathcal{S}$ of the same type. For every pair of operators in $Q_{l}$ and $N(\mathcal{S}) \setminus \mathcal{S}$, store the total weight correction taking into account corrections from the previous level. 
   \item The final correction will consist of the operators in $Q_{l}$ and $N(\mathcal{S}) \setminus \mathcal{S}$ that gives the overall lowest weight correction. 
\end{enumerate}

As an example, consider the weight 8 error $X_{1}X_{2}$ on the first 4 blocks of the [15,15H] code. Correcting each block according to the stabilizers of the $\codepar{15,1,(3,7)}$ code results in an $X_{3}$ correction on each code block. Hence the total weight correction is 4 (a weight one correction on blocks 1-4). Note that the operator $X_{1}X_{2}X_{3}$ is a logical operator for the $\codepar{15,1,(3,7)}$ code. Using the stabilizers of the $\codepar{15,1,(7,3)}$ code, the error $X_{1}X_{2}X_{3}X_{4}$ triggers the syndrome $l=0100000001$ from the 10 $Z$ stabilizers. For this particular syndrome, $Q_{l}$ is given by the set $Q_{l} = \{ X_{5}X_{6}X_{7}, X_{1}X_{2}X_{3}X_{4}, \dots \}$. Going through all operators in $Q_{l}$ and all logical $X$ operators in the set $N(\mathcal{S}) \setminus \mathcal{S}$, we find that the overall lowest weight correction is to apply $X_{1}X_{2}X_{3}X_{4}$ on the first 4 blocks with the logical $\overline{X}=X_{1}X_{2}X_{3}$. This correction will thus remove the initial weight 8 error. Since $X_{3}$ was applied at the previous level, the total weight correction is $4(3-1) = 8$. If one had chosen $X_{5}X_{6}X_{7}$, the total weight correction would have been $3(3) + 4 = 13$. The factor of 4 comes from the weight one corrections of the previous level. Applying the latter correction would lead to a logical fault.

Note that the minimum weight decoder in this section differs significantly from the symmetric decoder considered in \cref{subsec:PauliChanG}. Instead of correcting independently at each level, we use information from both concatenation levels to apply an \textit{overall} lowest weight correction.

\subsection{Comparison of the logical failure rates of the $T_{x}$ gate with the $H$ gate of the [7,15] code}
\label{subsec:LogicalFailureRates}

\begin{figure}
\centering
\includegraphics[width=0.5\textwidth]{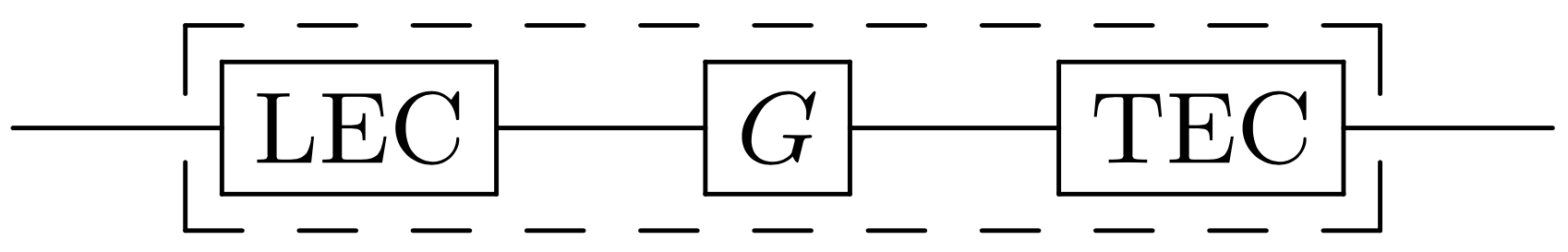}
\caption{Illustration of an extended rectangle (exRec) implementing a logical gate $G$ with its leading (LEC) and trailing (TEC) error correction circuits. The EC circuits perform Steane error correction.}
\label{fig:ECcircuit}
\end{figure}

In a quantum algorithm, one needs to perform fault-tolerant error correction in between the application of logical gates in order to ensure that errors do not spread in an uncontrollable fashion. This ensures that errors of weight $t = \lfloor (d-1)/2 \rfloor$ remain correctable. As was shown in \cite{AGP06}, even though logical gates can share multiple error correction (EC) circuits (such as Steane EC circuits), the threshold of a code is lower bounded by the gate with the highest threshold of all gates in a universal gate set. The gates must be combined with a leading and trailing error correction circuit (which are known as exRec's, an abbreviation that stands for extended rectangles). An illustration of an exRec is shown in \cref{fig:ECcircuit}. Often the exRec with the lowest threshold value will correspond to the CNOT exRec due to the large number of locations (since a CNOT exRec contains four EC circuits compared to two for single-qubit gates). However, as was observed in \cite{CJL16}, for universal concatenated quantum codes where the CNOT gate is transversal in both code's, the CNOT exRec does not necessarily provide a lower bound on the code's threshold. 

In computing logical failure rates, we first define the notion of a malignant event \cite{PR12,CJL16,CJL16b}. We set $\ket{\psi_{1}}$ to be a single or two-qubit logical state obtained by applying ideal decoders immediately after the LEC circuit and $\ket{\psi_{2}}$ to be the logical state obtained by applying ideal decoders immediately after the TEC. We define the event $\mathrm{mal}_{E}$ as $\ket{\psi_{2}} = EU\ket{\psi_{1}}$ where $E$ is a single or two-qubit error and $U$ is the desired logical gate. 

In performing our simulations, at each location of a given exRec $G$, we inserted errors given by the noise model described in \cref{subsec:NoiseAndDecoding}. Once all the error locations were fixed, we propagated the errors through the exRec and ideally decoded the output to determine whether a logical-fault occurred. We determined if the event $\mathrm{mal}_{E}$ occurred for all logical Pauli errors. We repeated our procedure $N$ times in order to compute the probabilities $\mathrm{Pr}(\mathrm{mal}_{E}) = a_{E}/N$. Here $a_{E}$ is the number of times the event $\mathrm{mal}_{E}$ occurred during the $N$ simulations. 
As was shown in Ref.~\cite{CJL16}, for an exRec simulating the gate $G$ under the depolarizing noise channel describe in \cref{subsec:NoiseAndDecoding}, we can upper bound $\mathrm{Pr}(\mathrm{mal}_{E})$ by

\begin{align}
\mathrm{Pr}[\mathrm{mal}^{(1)}_{E}|G,p] \le \sum^{L_{G}}_{k=\lceil \frac{d^{*}}{2} \rceil}c(k)p^{k} \vcentcolon = \Gamma^{(1)}_{G}.
\label{eq:UpperBoundProb}
\end{align}
The coefficients $c(k)$ parametrize the possible weight-$k$ errors that lead to a logical fault, $L_{G}$ is the total number of locations in the circuit $G$ and $d^{*}$ characterizes the minimal distance of a given logical gate. 

When going to the second concatenation level, each level-1 exRec can be treated as a physical location with the effective noise rate given by the polynomials $\Gamma^{(1)}_{G}$. More details can be found in Refs.~\cite{CJL16,CJL16b}.

\begin{figure}
\centering
\includegraphics[width=0.5\textwidth]{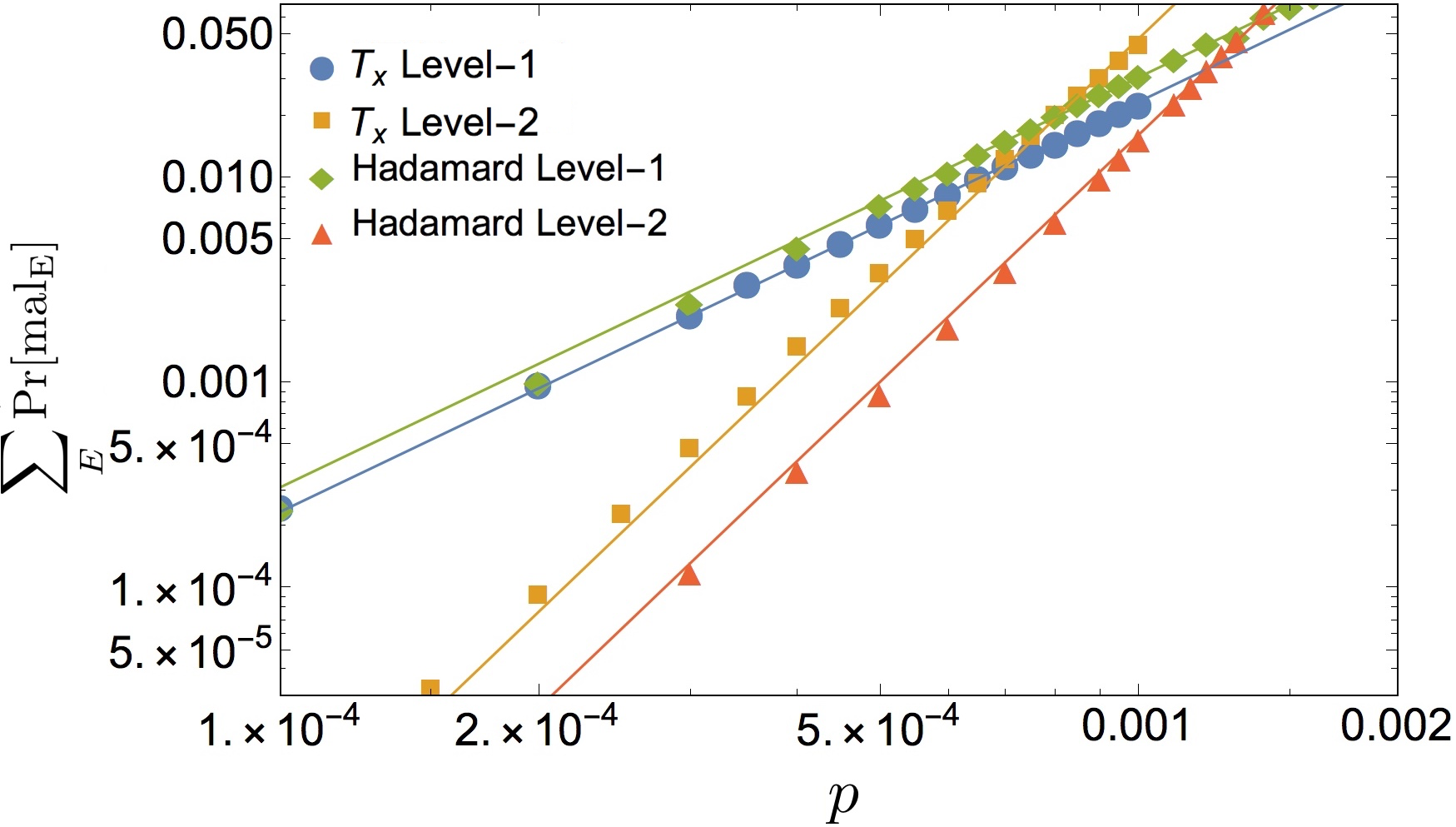}
\caption{Log-log plots comparing the total probability of failure for the Hadamard exRec of the [7,15] code with the $T_{x}$ exRec of the [15,15H] code at the first two levels. We chose the $H$ and $T_{x}$ gates since they provide lower bounds on the thresholds of the [7,15] and [15,15H] code.}
\label{fig:225and105Comparison}
\end{figure}

When using Steane error correction, the ancilla states $\ket{\overline{0}}$ and $\ket{\overline{+}}$ are encoded using the same error correcting code that protects the data. For codes using a large number of physical qubits to encode the data, the circuits encoding the ancilla states in Steane error correction will be very large. As an example, for the [15,15H] code which encodes 1 logical qubit to 225 physical qubits, the ancilla circuits contain approximately 2000 locations. For physical error rates $p \gtrsim 10^{-3}$, the ancilla states will get rejected with very high probability which leads to unrealistically large physical qubit overheads. To avoid large overheads arising from ancilla rejection, we will only compare the [7,15] and [15,15H] codes for physical error rates $p < 10 \times 10^{-3}$.

In \cref{fig:225and105Comparison}, we compare the \textit{total} logical failure rates of the $H$ and $T_x$ exRecs by summing $\mathrm{Pr}(\mathrm{mal}_{E})$ over all malignant events. We chose the Hadamard exRec for the [7,15] code since it lower bounds the codes threshold value \cite{CJL16}. For the $T$ gate exRec implemented in the [15,15H] code, we found that the errors were dominated by logical $Z$ errors whereas for the $T_x$ exRec, the errors were dominated by logical $X$ errors. 

\begin{figure}[h]
\centering
\begin{subfigure}{0.35\textwidth}
\includegraphics[width=\textwidth]{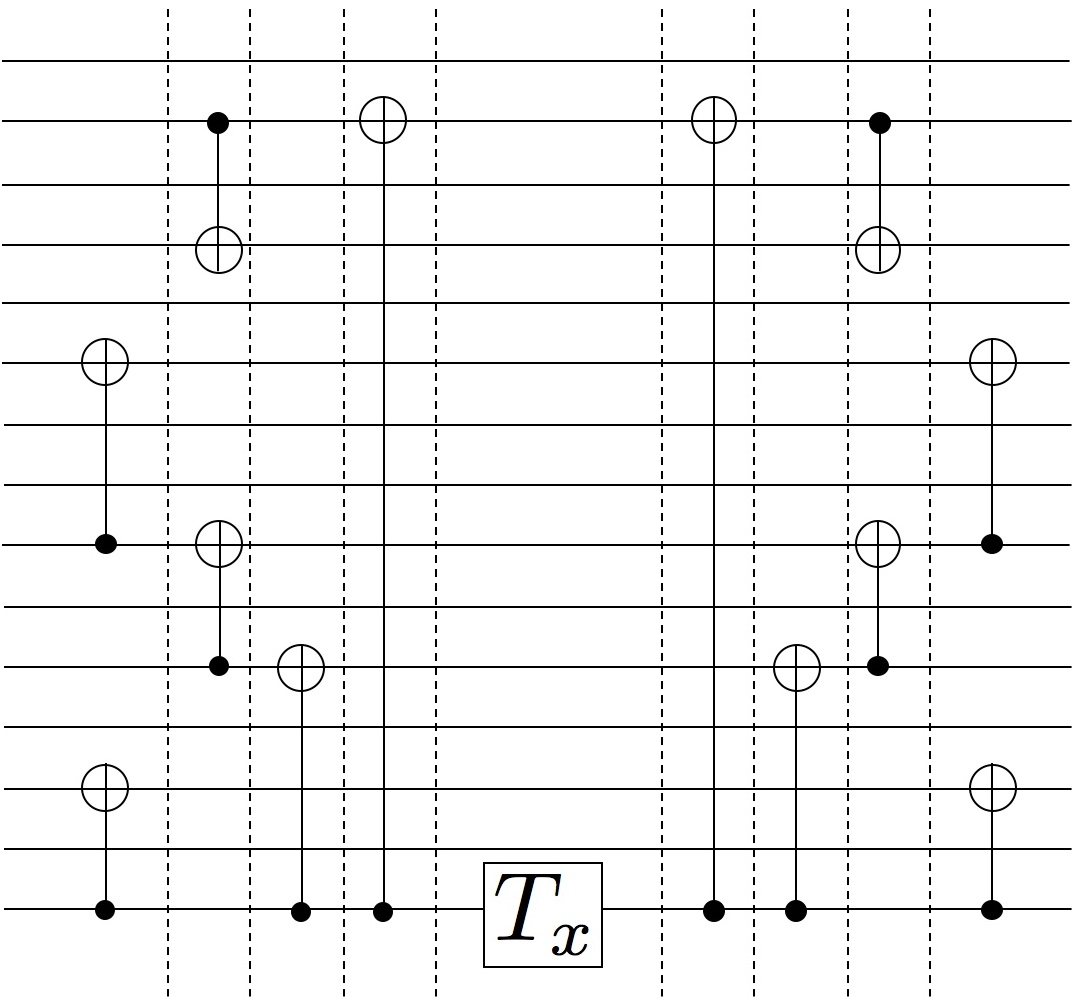}
\caption{}
\label{fig:HTHcircuit1}
\end{subfigure}
\begin{subfigure}{0.4\textwidth}
\includegraphics[width=\textwidth]{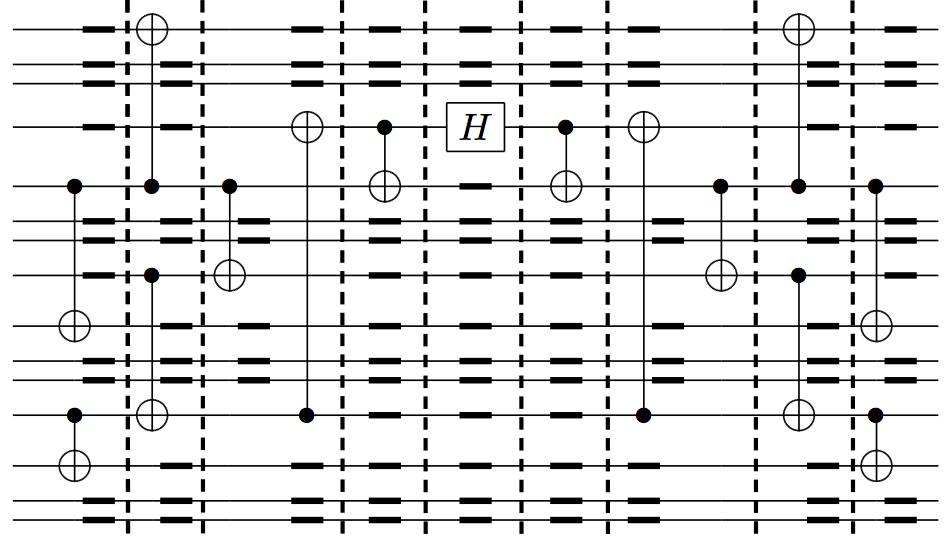}
\caption{}
\label{fig:HadCircuit}
\end{subfigure}
\caption{ (a) Circuit implementing a logical $T_{x}$ gate in the [15,15H] code. Each line is encoded in the Hadamard transform of the 15-qubit Reed-Muller code and $T_{x}$ is implemented transversally on the 15'th code block. (b) Circuit implementing a logical Hadamard gate in the 15-qubit Reed-Muller code. Note that to implement a logical Hadamard gate in the [7,15] code, each of the 7 code blocks implement the circuit in \cref{fig:HadCircuit} since the Hadamard gate can be implemented transversally in the Steane code. Solid dark lines represent resting qubit locations.}
\label{fig:HTHandHadcircuits}
\end{figure}

A detailed analysis of the circuits implementing a logical $T_{x}$ and $T$ gate in the 15-qubit code and its Hadamard transform is provided in \cref{app:MinCNOTforTandHTH} (see also \cref{fig:TandHTHcircuits}). Our numerical results show that the $T_x$ exRec has higher logical failure rates compared to the $T$ gate exRec. To understand these results, we first note that  from \cref{tab:BiasedNoiseTab}, the [15,15H] code is able to correct more $Z$ errors since the inner code has higher $d_{Z}$ distance. Consequently, after performing Steane error correction, the remaining errors at the output of an EC circuit will be biased towards $X$ errors. The $T_{x}$ exRec is very sensitive to input $X$ errors since there are more locations where they can propagate yielding a logical fault. Alternatively, the $T$ gate exRec is more sensitive to input $Z$ errors. Since at the first concatenation level input errors to the $T$ and $T_{x}$ exRecs will be dominated by $X$ errors, it is not surprising that the logical failure rate of the $T_{x}$ exRec will be larger.

When going to higher concatenation levels, there will be much fewer errors at the output of an EC circuit due to the double protection the CNOT gates from both codes. Therefore, input errors to the $T_{x}$ and $T$ gate exRecs will occur with much lower probability. However, since logical $X$ errors will occur with higher probability in a level-1 $T_{x}$ exRec compared to logical $Z$ errors in a level-1 $T$ gate exRec, at higher concatenation levels the $T_{x}$ exRec will continue to fail with higher probability compared to a $T$ gate exRec.

At the first concatenation level, it can be seen from \cref{fig:225and105Comparison} that the $T_{x}$ exRec slightly outperforms the Hadamard exRec of the [7,15] code for physical error rates $p > 10^{-4}$. We point out however that at low physical error rates, the pseudo-threshold of the [7,15] code ($(4.47 \pm 0.29) \times 10^{-5}$) is slightly higher than the pseudo-threshold of the [15,15H] code ($(3.91 \pm 0.24) \times 10^{-5}$), as obtained by our least-squares fit to the numerical data. We note that the [15,15H] code has roughly twice as many qubits as the outer code is larger, therefore increasing the number of possible error paths. However it also benefits from increased protection against $Z$~errors, therefore these two competing factors results in a slightly improved performance at the first concatenation level.

Due to the double protection from both the 15-qubit Reed-Muller code and its Hadamard transform code, compared to the $T$ and $T_{x}$ exRecs, all other exRecs of the [15,15H] code have lower logical failure rates by several orders of magnitude for the sampled physical error rates. Consequently, their contribution to the total logical failure rate of the $T$ and $T_{x}$ exRecs at the second concatenation level will be negligible. Now consider the circuit implementing the logical $T_{x}$ gate shown in \cref{fig:HTHcircuit1}, where each line is encoded in the Hadamard transform of the 15-qubit Reed-Muller code (more details are provided in \cref{app:MinCNOTforTandHTH}). A single $X$ error occurring at the $T_{x}$ gate location will result in a weight-one error on 7 code blocks after propagating through the CNOT gates. However, each error will be corrected on the individual code blocks due to the protection from the inner code. On the other hand, if a weight-2 $X$ error occurred at the $T_{x}$ location, the error would propagate to weight-two errors on 7 code blocks. Since the inner code has an $X$ distance $d_{X} = 3$, a logical fault would occur on each code block leading to an overall logical $X$ error. Therefore, at the second concatenation level, $\mathrm{Pr}[\mathrm{mal}^{(2)}_{X}|T_{x},p]$ can be approximated by

\begin{align}
\mathrm{Pr}[\mathrm{mal}^{(2)}_{X}|T_{x},p] \approx \sum_{l=2}^{15}\binom{15}{l} (\Gamma_{T_{x}}^{(1)})^{l},
\label{eq:HTHestimateLevel2}
\end{align}
where $\Gamma_{T_{x}}^{(1)}$ upper bounds $\mathrm{Pr}[\mathrm{mal}^{(1)}_{X}|T_{x},p]$. In \cref{eq:HTHestimateLevel2} we neglected the contributions from the other exRecs (CNOT, storage, measurement and state-preparation) since their level-1 logical failure rates are smaller by several orders of magnitude for the sampled physical error rates. Using $\Gamma_{T_{x}}^{(1)} \approx 23861p^{2}$ to leading order in $p$, obtained from our level-1 simulation, we find that \cref{eq:HTHestimateLevel2} reproduces the data of \cref{fig:225and105Comparison} obtained from our level-2 simulation. This result confirms that the performance of the [15,15H] code is mainly limited by $X$ errors of weight greater or equal to two occurring at the $T_{x}$ location in \cref{fig:HTHcircuit1}. Therefore, the primary benefit of the [15,15H] code is negated, since at the weak point of the circuit in~\cref{fig:HTHcircuit1} where the $T_x$ is implemented, there is only protection from the inner code which is susceptible to $X$~errors. The same will be true for the Hadamard gate in~\cref{fig:HadCircuit} for the [7,15]~code, however the number of locations in the inner code will be about half those in the [15,15H]~code, therefore the probability of error will be smaller. As such, the level-2 logical failure rate for the Hadamard circuit is smaller then that of the $T_x$ gate, as shown in~\cref{fig:225and105Comparison}.

Due to the large number of resting qubit locations in the $T_{x}$ exRec, we also performed a level-1 simulation of the $T_{x}$ exRec where the failure rate of resting qubit locations were 100 times less than gate locations. Even in this case, our numerical analysis showed that at the second concatenation level, the Hadamard exRec of the [7,15] code still achieved smaller logical failure rates compared to the $T_{x}$ exRec of the [15,15H] code for all sampled error rates.

\section{Conclusion}
\label{sec:Conclusion}

In this paper we began by considering error correcting codes obtained by concatenating two different codes $C_{1}$ and $C_{2}$. In \cref{subsec:OptimizeConcat} we provided general arguments showing that by choosing the inner code to be the code with the highest pseudo-threshold, the resulting code would have a higher code capacity threshold value. In \cref{subsec:PauliChanG} we briefly reviewed the process matrix formalism which we used in \cref{subsec:UniversalQuantumCodes} to compute exact threshold values of several universal concatenated quantum codes. All universal concatenated codes where the inner code had the highest pseudo-threshold achieved better error suppression capabilities. In particular, for universal concatenated codes where $C_1$ and $C_2$ were CSS codes, we found that concatenating the 15-qubit code with its Hadamard transform code (which we called [15,15H]) resulted in a code with the highest code capacity threshold value compared to the other codes in \cref{tab:DepolarizingThresholdResults}. The universal gate set of the [15,15H] code is not the standard Clifford + $T$ gate set but is instead generated by $\langle T_{x}, T, CNOT \rangle$ where $T_{x} = HTH$. This technique would generalize to higher distance 3D~color codes, among which the 15-qubit code is the smallest non-trivial code. Moreover, we believe that similar gauge switching techniques for the smaller stabilizers could be generalized to higher-order Reed-Muller codes, allowing for the implementation of fault-tolerant gates from higher levels of the Clifford hierarchy.

In \cref{sec:Circuit15Noise}, we performed a full circuit level noise analysis of the [15,15H] code and showed that its threshold was limited by the $T_{x}$ gate exRec. We compared the total logical failure rate of the $T_{x}$ exRec at the first and second concatenation level to that of the Hadamard exRec for the [7,15] code (which limits its threshold value). Although the code capacity threshold of the [15,15H] code is higher than the [7,15] code by more than a factor of 2, we showed that the [7,15] code outperforms the [15,15H] code when considering gate, preparation and measurement errors. We provided an analytic argument showing that the large number of error configurations of weight greater or equal to two occurring at the $T_{x}$ location of \cref{fig:HTHcircuit1} compared to error configurations of weight greater or equal to two occurring at different $H$ locations\footnote{Recall that each line of \cref{fig:HTHandHadcircuits} is composed of many qubits and transversal gates} of~\cref{fig:HadCircuit} was the primary reason that the [7,15] code achieved lower logical failure rates.

We conclude that for universal concatenated codes, it is preferable to use smaller codes in order to minimize the number of locations where errors can propagate in circuits used for fault-tolerantly implementing one of the underlying codes non-transversal gates. We note that a possibility for improved performance of the complementary concatenation scheme would be to use higher-distance 3D~color codes, strengthening potential weak points in the circuits with further protection. However, we note that again that there is a careful balance between growing the code size and improving the distance that would require further analysis.

\section{Acknowledgements}
The authors would like to thank Joel Wallman for useful discussions and Steve Weiss for providing the necessary computational resources. C. C. would like to acknowledge the support of QEII-GSST. T.~J. would like to acknowledge the support from the Walter~Burke Institute for Theoretical Physics in the form of the Sherman Fairchild Fellowship.

\bibliographystyle{ieeetr}
\bibliography{bibtex_chamberland}

\clearpage
\appendix

\section{Proof of transversal $HTH$ for the rotated Reed-Muller code}
\label{app:TransversalHTH}

Let the rotated Reed-Muller code be the quantum error correcting code where all of the $X$~stabilizers of the original Reed-Muller code are replaced by $Z$~stabilizers, and vice versa. Additionally, define logical~$X$ as $X_{L,15H} = X^{\otimes 15}$, and logical~$Z$ as $Z_{L, 15H} = Z^{\otimes 15}$. That is, $\mathcal{S}_{15H} := \{ H^{\otimes 15} S H^{\otimes 15} | S \in \mathcal{S}_{15} \}$, where $\mathcal{S}_{15}$ and $\mathcal{S}_{15H}$ are the stabilizer groups of the 15-qubit and rotated 15-qubit Reed-Muller codes, respectively.
\begin{claim}
$\otimes_i (HTH)_i$ is a transversal gate for the rotated Reed-Muller code and implements a logical~$(HT^{\dagger}H$)~gate.
\end{claim}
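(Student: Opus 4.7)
The plan is to reduce the claim to the established transversality of $T\tn{15}$ on the original 15-qubit Reed-Muller code, by exploiting the explicit unitary $H\tn{15}$ relating the two codes.

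First, I would pin down the encoding isometry of the rotated code. Let $E_{15}$ and $E_{15H}$ denote the encoding isometries of the original and rotated Reed-Muller codes. Since $\mathcal{S}_{15H}=H\tn{15}\mathcal{S}_{15}H\tn{15}$ but the claim retains the \emph{same} logical symbols $X_{L,15H}=X\tn{15}$ and $Z_{L,15H}=Z\tn{15}$ as in the unrotated code, the naive choice $E_{15H}=H\tn{15}E_{15}$ would swap the roles of logical $X$ and $Z$. The correct encoding is therefore $E_{15H}=H\tn{15}E_{15}H$, with an extra logical Hadamard on the single input qubit; a one-line check, $X\tn{15}E_{15H}=H\tn{15}Z\tn{15}E_{15}H=H\tn{15}E_{15}ZH=H\tn{15}E_{15}HX=E_{15H}X$ (and analogously for $Z$), confirms this.

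Second, I would invoke the known transversality of $T$ for the 15-qubit Reed-Muller code~\cite{ADP14}, written in the form $T\tn{15}E_{15}=E_{15}T\ad$, where the dagger reflects the standard sign convention for that code (and is also what the claim's $HT\ad H$ eventually demands). The proof then collapses to a direct chain of substitutions. For any logical state $E_{15H}\ket{\psi}$,
\begin{align}
(HTH)\tn{15}E_{15H}\ket{\psi}
&= H\tn{15}T\tn{15}H\tn{15}\cdot H\tn{15}E_{15}H\ket{\psi}\nonumber\\
&= H\tn{15}T\tn{15}E_{15}H\ket{\psi}\nonumber\\
&= H\tn{15}E_{15}T\ad H\ket{\psi}\nonumber\\
&= E_{15H}\,(HT\ad H)\ket{\psi},
\end{align}
which simultaneously shows that $(HTH)\tn{15}$ preserves the rotated code space (so it is transversal for 15H) and that the induced logical action is $HT\ad H$.

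The only subtle step is the basis-bookkeeping that fixes the extra $H$ inside $E_{15H}$; without it one would naively conclude that the induced gate was $T\ad$ rather than $HT\ad H$, because the symbols for logical $X$ and $Z$ would have silently been swapped under the physical Hadamard. Everything else is a routine conjugation calculation built on top of the already-known transversal-$T$ property of the original Reed-Muller code, so no genuinely new technique is required.
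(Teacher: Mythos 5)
Your proof is correct, but it takes a genuinely different route from the paper's. The paper works entirely in the Heisenberg picture: it conjugates the stabilizer group $\mathcal{S}_{15H}$ and the logical Paulis $X^{\otimes 15}$, $Z^{\otimes 15}$ by $(HTH)^{\otimes 15}$, checks that the stabilizer group and logical $X$ are preserved, and verifies that logical $Z$ is mapped to $\tfrac{1}{\sqrt{2}}(Z^{\otimes 15}-Y^{\otimes 15})$ times a stabilizer, matching the action of $HT^{\dagger}H$. You instead work at the level of encoding isometries, writing $E_{15H}=H^{\otimes 15}E_{15}H$ and reducing the whole claim to the single intertwining relation $T^{\otimes 15}E_{15}=E_{15}T^{\dagger}$. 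Your route is shorter and avoids having to track the non-Pauli image of logical $Z$ explicitly, and your "bookkeeping" observation is a real clarification: the paper fixes the logical operators of the rotated code to be $X^{\otimes 15}$ and $Z^{\otimes 15}$ by fiat, which implicitly inserts exactly the logical Hadamard you make explicit, and this is precisely why the induced gate is $HT^{\dagger}H$ rather than $T^{\dagger}$. What the paper's operator-conjugation argument buys in exchange is independence from any choice of encoder: it never needs to assert that an isometry with exact (sign-correct) intertwining relations $X^{\otimes 15}E_{15}=E_{15}X$ and $Z^{\otimes 15}E_{15}=E_{15}Z$ exists, whereas your argument quietly relies on fixing such a convention (a harmless but worth-stating point, since $X^{\otimes 15}$ and $Z^{\otimes 15}$ do commute with all stabilizers and anticommute with each other, so such an $E_{15}$ exists). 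Both proofs ultimately rest on the same imported fact that transversal $T$ implements logical $T^{\dagger}$ on the 15-qubit Reed--Muller code.
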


\begin{proof}
We begin by showing that the stabilizers of the 15H~code are preserved under the action of~$\otimes_i (HTH)_i$. Let $Q \in \mathcal{S}_{15H}$, then,
\begin{align*}
(HTH)^{\otimes 15} Q (HT^{\dagger}H)^{\otimes 15} &= (HT)^{\otimes 15} S_Q (T^{\dagger}H)^{\otimes 15} \\
&= (H)^{\otimes 15} S'_Q (H)^{\otimes 15}\\
&= Q',
\end{align*}
Note $S_Q = H^{\otimes 15} Q H^{\otimes 15} \in \mathcal{S}_{15}$, by definition. Then, since $S_Q \in \mathcal{S}_{15}$ and the $T$~gate is transversal (implementing logical $T^{\dagger}$) for the 15-qubit Reed Muller code, it preserves the stabilizer group and $S'_Q = T^{\otimes 15} S_Q (T^{\dagger})^{\otimes 15} \in \mathcal{S}_{15}$. Finally, $ Q' = H^{\otimes 15} S'_Q H^{\otimes 15} \in \mathcal{S}_{15H}$, by definition, and as such the stabilizer group is preserved.

We now have to show that the logical operators transform in the appropriate manner to determine the logical transformation. Recall again that transversal~$T$ applies logical~$T^{\dagger}$ for the 15-qubit code. Therefore,
\begin{align*}
(HTH)^{\otimes 15} X^{\otimes 15} (HT^{\dagger} H)^{\otimes 15} &= (HT)^{\otimes 15} Z^{\otimes 15} (T^{\dagger} H)^{\otimes 15} \\
&= H^{\otimes 15} Z^{\otimes 15} H^{\otimes 15} \\
&= X^{\otimes 15}, 
\end{align*}
and the logical~$X$ operator is preserved. As for the logical~$Z$ operator,
\begin{align*}
(HTH)^{\otimes 15} Z^{\otimes 15}  (H & T^{\dagger} H)^{\otimes 15} \\
&= (HT)^{\otimes 15} X^{\otimes 15} (T^{\dagger} H)^{\otimes 15} \\
&= H^{\otimes 15} \left(\dfrac{1}{\sqrt{2}}(X^{\otimes 15} - Y^{\otimes 15}) \right)S H^{\otimes 15} \\
&= \left(\dfrac{1}{\sqrt{2}}(Z^{\otimes 15} - Y^{\otimes 15}) \right)Q_S ,
\end{align*}
where we have used the fact that transversal~$T$ implements a logical~$T^{\dagger}$ in the 15-qubit Reed-Muller code, therefore $X_L$ transforms to $\dfrac{1}{\sqrt{2}}(X^{\otimes 15} - Y^{\otimes 15})$, up to a stabilizer~$S$. Therefore, transversal~$(HTH)$ implements logical~$(HT^{\dagger}H)$ for the rotated 15-qubit code.
\end{proof}

\section{Generating a universal set of gates}
\label{app:CliffordGenerating}

The gate set $\langle H, T, \text{CNOT} \rangle$ is a universal gate set for quantum computation~\cite{BMPRV99}. Therefore, we show~$\langle T_x, T, \text{CNOT} \rangle$ is universal by showing it can generate~$H$.

\begin{claim}
$\langle T_x, T, \text{CNOT} \rangle$ form a universal gate set.
\end{claim}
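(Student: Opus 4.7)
The plan is to reduce to the known universality of $\langle H, T, \text{CNOT}\rangle$~\cite{BMPRV99} by showing that the Hadamard gate can be expressed (up to an irrelevant global phase) as a short product of $T$ and $T_x$ gates; the CNOT is not even needed for this single-qubit step.

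My first step is to note that $T^2 = S = \mathrm{diag}(1,i)$ and therefore $T_x^2 = (HTH)^2 = H T^2 H = HSH$, since $H^2 = I$. Thus both $S$ and $HSH$ lie in the generated group, and so does the product $T^2 \, T_x^2 \, T^2 = S \cdot HSH \cdot S$. In the Bloch-sphere picture this is the Euler-angle identity $R_z(\pi/2)R_x(\pi/2)R_z(\pi/2) \propto H$, which is how I would motivate the choice of this particular word; $H$ is a $\pi$-rotation about the axis $(\hat x + \hat z)/\sqrt{2}$ and such a rotation always admits a $ZXZ$-decomposition with each angle equal to $\pi/2$.

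Next I would verify the identity by direct $2\times 2$ matrix multiplication. Computing $HSH$ gives
\begin{equation*}
HSH = \tfrac{1}{2}\begin{pmatrix} 1+i & 1-i \\ 1-i & 1+i \end{pmatrix},
\end{equation*}
and sandwiching by $S$ on both sides collapses the off-diagonal and diagonal phases symmetrically, yielding $S\,(HSH)\,S = \tfrac{1+i}{2}\bigl(\begin{smallmatrix} 1 & 1 \\ 1 & -1 \end{smallmatrix}\bigr) = e^{i\pi/4} H$. Hence $H = e^{-i\pi/4}\, T^2 T_x^2 T^2$, exhibiting $H$ as a product of generators in $\langle T_x, T, \text{CNOT}\rangle$ up to a global phase of $e^{-i\pi/4}$.

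With $H$ in hand, the claim follows immediately: any circuit built from $H$, $T$, and CNOT can be rewritten as a circuit over $\{T_x, T, \text{CNOT}\}$ by substituting this expression for each occurrence of $H$, and global phases are inconsequential for quantum computation. I do not expect any serious obstacle here; the only minor subtlety is confirming that the $e^{-i\pi/4}$ global phase is genuinely harmless (which it is, since global phases commute with every unitary and are unobservable under measurement), and verifying the Euler-angle identity numerically rather than trusting the geometric intuition.
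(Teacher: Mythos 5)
Your proof is correct and follows essentially the same route as the paper: both square the generators to obtain $S = T^2$ and $HSH = T_x^2$, and then express $H$ (up to global phase) as a three-fold Euler-angle product of these Cliffords. The only cosmetic differences are that you use the word $S(HSH)S$ and verify it by explicit $2\times 2$ matrix multiplication (also extracting the phase $e^{i\pi/4}$), whereas the paper uses $(HSH)S(HSH)$ and checks equality up to phase by tracking the conjugation action on the Pauli operators.
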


\begin{proof}
Note that $T_x = HTH$ can easily generate $HSH = (HTH)^2$, since $H^2 = I$ and $T^2 = S$. We will now show that using the Clifford gates $HSH$ and $S$ we can generate~$H$. First note how the $H$~transforms the Pauli operators under conjugation:
\begin{align*}
H : \ & X \rightarrow Z \\
& Y \rightarrow -Y \\
& Z \rightarrow X.
\end{align*}
Now, note how $HSH$ and $S$ transform the Pauli operators under conjugation:
\begin{align*}
HSH : & \ X \rightarrow X \\
& Y \rightarrow Z  \\
& Z \rightarrow -Y , \\
\\
S : \  &  X \rightarrow Y \\
& Y \rightarrow -X \\
& Z \rightarrow Z. 
\end{align*}
Given these transformation properties, it is straightforward to note that $(HSH)S(HSH)$ will be equivalent (up to global phase) to a Hadamard transformation:
\begin{align*}
&X \xrightarrow{HSH} X \xrightarrow{S} Y \xrightarrow{HSH} Z \\
&Y \xrightarrow{HSH} Z \xrightarrow{S} Z \xrightarrow{HSH} -Y \\
&Z \xrightarrow{HSH} -Y \xrightarrow{S} X \xrightarrow{HSH} X.
\end{align*}
\end{proof}

\section{General method for finding encoded $|+\rangle$ circuits}
\label{app:EncodedCircuits}

Calderbank-Shor-Steane (CSS) codes are constructed from two classical codes $C_{1}$ and $C_{2}$ with the property that $C_{1}^{\perp} \subseteq C_{2}$ \cite{SteaneCSS}. The codes parity check matrix is given by 

\begin{align}
H = \left( \begin{array}{cc}
                                          H_{z} &  \\
                                           &H_{x}\\                                                                     
                                          \end{array} \right).
\label{eq:ParityCheck}
\end{align}
The $X$-stabilizers generators are given by the rows of $H_{x}$ with one's replaced by $X$ operators and zeros mapped to the identity. Similarly the $Z$-stabilizers generators are given by the rows of $H_{z}$. The condition  $C_{1}^{\perp} \subseteq C_{2}$ ensures that $H_{x} \cdot H_{z}^{T} = H_{z} \cdot H_{x}^{T} = 0$ so that all stabilizer generators commute. 

We first describe a method for finding the encoding circuits of $|\overline{0} \rangle$ for CSS codes. Suppose that the code has $r$ $X$-stabilzer generators. The standard method has been to perform Gaussian elimination on the matrix $H_{x}$ to reduce it to the form

\begin{align}
H_{x} \to \left( \begin{array}{cc}
                                          I&| \hspace{0.2cm} A                                                                                           
                                          \end{array} \right),
\label{eq:FirstGaussianMethod}
\end{align} 

where $I$ is the $r \times r$ identity matrix \cite{CDT09, PR12, Steane02}. For each row, the ones of the identity matrix act as the control qubits of the CNOT, which are prepared in the state $| + \rangle$. The remaining ones are the target qubits prepared in the state $|0 \rangle$. However, in most cases, the Gaussian elimination steps that are performed to reduce $H_{x}$ to the form in \cref{eq:FirstGaussianMethod} require permutation of some of the columns of $H_{x}$. 

We point out that permutation of the columns of $H_{x}$ in the Gaussian elimination procedure is unnecessary. Instead, one can perform Gaussian elimination on the matrix $H_{x}$ without swapping any columns. For each row of the resulting matrix, the first non-zero element acts as the control qubit of the CNOT prepared in the $| + \rangle$ state. The remaining ones of the row under consideration acts as the target qubits of the CNOT. Our method requires fewer operations and avoids having to swap qubits in order to obtain the correct encoded state.

Self-dual CSS codes satisfy the property that $C_{1}^{\perp} = C_{2}$. For self-dual codes, the encoding circuit for the encoded $|\overline{+} \rangle$ can be obtained from the encoding circuit of $|\overline{0} \rangle$ by interchanging all the physical $|+\rangle$ states with the $|0\rangle$ states and reversing the direction of all the CNOT gates. We now present a method for finding the encoding circuit of the $|\overline{+} \rangle$ state which is applicable to any CSS codes. 

Once the encoding circuit for the $\ket{\overline{0}}$ state is obtained, it can be written as 

\begin{align}
\ket{\overline{0}} = U_{CNOT}\ket{\psi_{in}},
\label{eq:PlusEncodingStep1}
\end{align}
where $U_{CNOT}$ is the unitary matrix containing all the CNOT gates in the encoding circuit of $\ket{\overline{0}}$ and $\ket{\psi_{in}}$ is its input state, consisting of a tensor product of $\ket{0}$ and $\ket{+}$ states. 

The $\ket{\overline{1}}$ can be obtained from the $\ket{\overline{0}}$ state by using the codes logical $\overline{X}$ as follows
\begin{align}
\ket{\overline{1}} &= \overline{X}\ket{\overline{0}} \nonumber \\
 &=  \overline{X}U_{CNOT}\ket{\psi_{in}} \nonumber \\
 &= U_{CNOT} \tilde{X}\ket{\psi_{in}},
\label{eq:PlusEncodingStep2}
\end{align}
where the operator $\tilde{X}$ is obtained by commuting the codes logical $\overline{X}$ through the CNOT gates backwards in time. In other words, we propagate $\overline{X}$ from the final CNOT gates in the $\ket{\overline{0}}$ circuit to the direction of the input state $\ket{\psi_{in}}$. 

Since $\ket{\overline{+}} = \ket{\overline{0}} + \ket{\overline{1}}$, combining \cref{eq:PlusEncodingStep1,eq:PlusEncodingStep2} we obtain 

\begin{align}
\ket{\overline{+}} = U_{CNOT}(\ket{\psi_{in}} + \tilde{X}\ket{\psi_{in}}).
\label{eq:PlusFinal}
\end{align}
This shows that the encoding circuit for $\ket{\overline{+}}$ can be obtained from the same encoding circuit as $\ket{\overline{0}}$ but with the modified input state $\ket{\psi_{in}} + \tilde{X}\ket{\psi_{in}}$. The circuit depth can then be optimized by applying Steane's Latin rectangle method. We note that our method is in general not optimal in terms of circuit depth and CNOT gate counts. However, methods presented in \cite{CJL16b,PR12} can be used to reduce the number of CNOT gates in the encoding circuits. 

\section{Minimum number of CNOT gates for $T_{x}$ and $T$ circuits}
\label{app:MinCNOTforTandHTH}

\begin{figure}[h]
\centering
\begin{subfigure}{0.35\textwidth}
\includegraphics[width=\textwidth]{HTHcircuit225.png}
\caption{}
\label{fig:HTHcircuit225}
\end{subfigure}
\begin{subfigure}{0.4\textwidth}
\includegraphics[width=\textwidth]{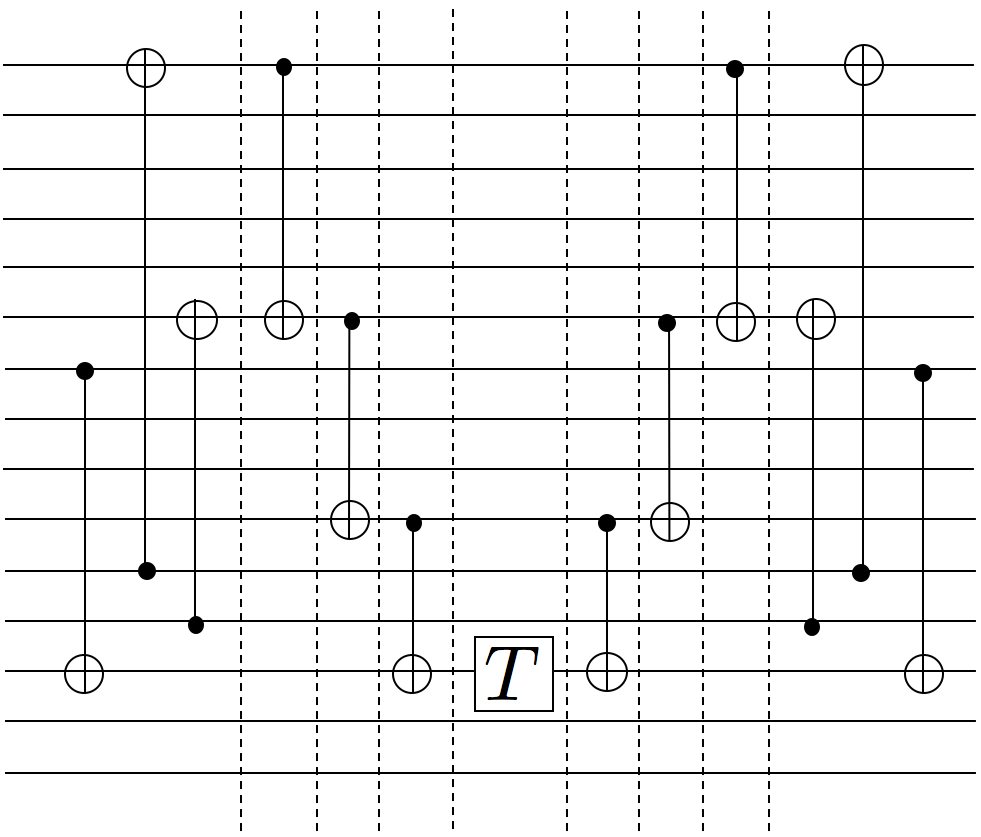}
\caption{}
\label{fig:Tcircuit225}
\end{subfigure}
\caption{ (a) $T_{x}$ and (b) $T$ gate circuits used in our simulations for the 225-qubit code. In both circuits the CNOT gates on the left of $T$ or $T_{x}$ are the inverse of the CNOT gates on its right. Each circuit has a total of 12 CNOT gates and is implemented in 9 time steps. If a single physical $T$ and $T_{x}$ gate is used, these circuits cannot be implemented with fewer than 12 CNOT gates.}
\label{fig:TandHTHcircuits}
\end{figure}

In this section we provide a proof that the number of CNOT gates used in the circuits implementing the $T_{x}$ gate in the 15-qubit Reed-Muller code and the circuits implementing the $T$ gate in the Hadamard transform of the 15-qubit Reed-Muller code cannot use fewer than six CNOT gates.

Since $T=\mathrm{diag}(1,e^{i\pi/4})$, it can be shown that 

\begin{align}
TXT^{\dagger} &= \frac{1}{\sqrt{2}}X(I+iZ), \label{eq:TtransformationRulesX} \\
TZT^{\dagger} &= Z, \label{eq:TtransformationRulesZ} \\
T_{x}XT_{x}^{\dagger} &= X, \label{eq:T_{x}transformationRulesX} \\
T_{x}ZT_{x}^{\dagger} &= \frac{1}{\sqrt{2}}Z(I+iX). \label{eq:T_{x}transformationRulesZ}
\end{align}

A general feature about the circuits in \cref{fig:TandHTHcircuits} is that they have the form $U_{CNOT}T_{x_{i}}U_{CNOT}^{\dagger}$ and $U_{CNOT}T_{j}U_{CNOT}^{\dagger}$. Given the transformation rules of \cref{eq:TtransformationRulesX,eq:TtransformationRulesZ,eq:T_{x}transformationRulesX,eq:T_{x}transformationRulesZ}, the form of the circuits ensures that for the $T_{x}$ gate, the 15-qubit codes $Z$ stabilizers have no support on i'th qubit when propgating through $U_{CNOT}$. Similarly, the $X$ stabilizers of the Hadamard transform of the 15-qubit code will have no support on the j'th qubit when propagating through $U_{CNOT}$. The gates $U_{CNOT}^{\dagger}$ will then undo the transformations underwent from $U_{CNOT}$ so that all stabilizer generators are unchanged when propagating through the $T_{x}$ and $T$ circuits. 

We first focus on the implementation of $T_{x}$ in the 15-qubit Reed-Muller code. Note that logical $X$ operators for the 15-qubit code have minimum weight 7 \cite{ADP14} (one possible representation is $\overline{X} = X_{1}X_{2}X_{3}X_{4}X_{5}X_{6}X_{7}$). From \cref{eq:T_{x}transformationRulesX}, $T_{x}$ leaves $X$ operators invariant under conjugation so that the circuit in \cref{fig:HTHcircuit225} will have no effect on $\overline{X}$. 

The logical $Z$ operator must transform as $\overline{Z} \to \frac{1}{\sqrt{2}}\overline{Z}(I-i\overline{X})$. The gates $U_{CNOT}$ in \cref{fig:HTHcircuit225} must be chosen to ensure that $\overline{Z}$ always has support on the i'th qubit (the qubit acted on by $T_{x}$) when propagating through $U_{CNOT}$. From \cref{eq:T_{x}transformationRulesZ} $Z_{i} \to \frac{1}{\sqrt{2}}Z_{i}(I-iX_{i})$. Hence we will have

\begin{align}
 &\overline{Z}U_{CNOT}T_{x_{i}}U_{CNOT}^{\dagger} \nonumber \\
 &= U_{CNOT}\tilde{Z}Z_{i}T_{x_{i}}U_{CNOT}^{\dagger}  \nonumber \\
 &= \frac{1}{\sqrt{2}}U_{CNOT}T_{x_{i}}\tilde{Z}Z_{i}(I-iX_{i})U_{CNOT}^{\dagger}
 \label{eq:T_{x}transformZbar}
 \end{align}

Since $\overline{Z}U_{CNOT} = U_{CNOT} \tilde{Z}Z_{i}$, then $\tilde{Z}Z_{i}U_{CNOT}^{\dagger} = U_{CNOT}^{\dagger}\overline{Z}$. However, we now require that $X_{i}U_{CNOT}^{\dagger} = U_{CNOT}^{\dagger}\overline{X}$ to ensure that the logical $Z$ operator transforms as intended by a logical $T_{x}$ gate. Since the minimum weight of $\overline{X}$ is 7 for the 15-qubit code, there must be a \textit{minimum of 6} CNOT gates in $U_{CNOT}^{\dagger}$ to ensure that the single-qubit $X_{i}$ operator transforms to a weight 7 $X$ operator. Therefore, the minimum number of CNOT gates in the $T_{x}$ circuit is 12. An analogous argument using the transformation rules in \cref{eq:TtransformationRulesX,eq:TtransformationRulesZ} can be used to show that the minimum number of CNOT gates for the $T$ circuit is also 12.

\section{Proof of minimal depth CNOT construction}
\label{app:MinCNOTDepth}

In this section we show that the transformations given in Figs.~\ref{fig:HTHcircuit225}--\ref{fig:Tcircuit225} are not only optimal in terms of their minimal number of CNOT gates, but also in their depth of the $U_{CNOT}$~transformation. As noted in \cref{app:MinCNOTforTandHTH}, the transformation $U_{CNOT}$ must satisfy:
\begin{align}
U_{CNOT} \overline{X} U_{CNOT}^{\dagger} &= X_i \\
U_{CNOT} \overline{Z} U_{CNOT}^{\dagger} &= Z_i,
\end{align}
for some chosen target qubit~$i$. We will prove the result for the logical gate~$T_x$ as the argument will follow similarly for the logical gate~$T$. As such, we have the following equalities to satisfy:
\begin{align}
U_{CNOT}^{\dagger}  X_{15} U_{CNOT} &= \overline{X} = X_{13} X_{14} X_{15}\\
U_{CNOT}^{\dagger} Z_{15} U_{CNOT} &= \overline{Z} = Z_{9} Z_{10} Z_{11} Z_{12} Z_{13} Z_{14} Z_{15},
\end{align}
where we have chosen particular minimal representations of the logical operators, however any such choice would be equivalent. We have chosen to study $U_{CNOT}^{\dagger}$ as we find it more intuitive to think of the inverse transformation, yet again all arguments will translate.

We already have a depth~$4$ circuit implementing the desired logical transformation and our goal will be to find a operation $U_{CNOT} = U_3 U_2 U_1$, where each $U_k$ is composed of only CNOT gates, with no overlapping support on the CNOT gate supports (that is each segment is trivially of depth~1). We do not consider the case of depth~2 as this can be easily ruled out by the following identity: $\text{wt}((U_2U_1)^{\dagger}P_i(U_2U_1)) \le 4$, where $\text{wt}(\cdot)$ is the weight of a Pauli operator. This equality follows from the fact that a given depth~1 circuit of CNOT gates can only at most double the support of a particular Pauli operator. Therefore, given a depth~2 circuit, a weight-1 $Z$~operator could at most grow to be of weight~4, smaller than the desired target weight of the distance~7. As such, this rules out circuits of depth~2.

However, circuits of depth~3 are not ruled out by the above argument. Yet, there is some structure that a depth~3 circuit would need to have in order to achieve the transformation: $U_{CNOT}^{\dagger} Z_{15} U_{CNOT} =  U_1^{\dagger} U_2^{\dagger}U_3^{\dagger} Z_{15} U_3 U_2 U_1 = \overline{Z}$. Note that $U_3$ must have a CNOT gate with qubit~15 as a target qubit in order to map $Z_{15}$ to a Pauli operator composed of $Z$~operators on 2~qubits. If not then $\text{wt}(U_1^{\dagger} U_2^{\dagger}(U_3^{\dagger} Z_{15} U_3) U_2 U_1) = \text{wt}(U_1^{\dagger} U_2^{\dagger} Z_{15} U_2 U_1) \le 4 < 7$, and we arrive at the same contradiction as before. Therefore, $U_3^{\dagger} Z_{15} U_3 = Z_l Z_{15}$, and as a result $U_3^{\dagger} X_{15} U_3 = X_{15}$, since qubit~15 is a target qubit, the $X_{15}$~remains unchanged. The operator $U_2$ must also have a set of CNOT~gates with target qubits on qubits~$l$~and~15, else we arrive at the following contradiction: $\text{wt}(U_2^{\dagger}(Z_l Z_{15})U_2) \le 3 \Rightarrow \text{wt}(U_1^{\dagger} U_2^{\dagger}(Z_l Z_{15})U_2 U_1) \le 2\cdot 3 = 6 < 7$. As such, we must have that~$U_2^{\dagger}(Z_l Z_{15})U_2 = Z_l Z_m Z_n Z_{15}$ and similarly $U_2^{\dagger} X_{15} U_2 = X_{15}$. However, we now arrive at our final contradiction, $\text{wt}(U_1^{\dagger}X_{15}U_1) \le 2 < 3$. Therefore, there exists no depth~3 $U_{CNOT}$ circuit that can achieve the desired transformation of the logical Pauli operators.

This is important from the perspective of error correction as minimizing the depth minimizes the number of memory locations in the circuit which have a large effect on the logical error rate at the first few concatenation levels.
\end{document}